\newtheorem{theorem}{Theorem}
\newtheorem{corollary}{Corollary}
\newtheorem{lemma}{Lemma}
  \renewcommand{\href}[2]{#2}% keep text, drop link
  \renewcommand{\url}[1]{#1}% keep text, drop link (just in case)
\tikzset{
  slashbox/.style={
    draw,
    append after command={
      (\tikzlastnode.south west) -- (\tikzlastnode.north east)
    }
  }
}
\begin{document}

\title{Quantum observers can communicate across multiverse branches}
\author{Maria Violaris}
\email{maria@violaris.com}
\affiliation{Department of Physics, University of Oxford, Oxford, United Kingdom}
\date{January 2026}

\begin{abstract}
It is commonly thought that observers in distinct branches of an Everettian multiverse cannot communicate without violating the linearity of quantum theory. Here we show a counterexample, demonstrating that inter-branch communication is in fact possible, entirely within standard quantum theory. We do this by considering a Wigner's-friend scenario, where an observer (Wigner) can have quantum control over another observer (the friend). We present a thought experiment where the friend in superposition can receive a message written by a distinct copy of themselves in the multiverse, with the aid of Wigner. To maintain the unitarity of quantum theory, the observers must have no memory of the message that they sent. Our thought experiment challenges conventional wisdom regarding the ultimate limits of what is possible in an Everettian multiverse. It has a surprising potential application which involves using knowledge-creation paradoxes for testing Everettian quantum theory against single-world theories. 
\end{abstract}

\maketitle

\section{Introduction}\label{sec:intro}

When observers are treated as quantum systems, many apparent paradoxes and ambiguities are resolved, leading to a counterintuitive but self-consistent Everettian multiverse account of reality \cite{Everett1957,Deutsch2002,Wallace2012}. In this account, while individual observers experience an approximately classical macroscopic reality locally, on a global scale they exist in superpositions of multiple such experiences occurring on different quantum ``branches'' \cite{Zurek2003}. A natural question to consider in such a setting is whether observers in a quantum multiverse can communicate across their branches, for instance to exchange a message with a copy of themselves in a different branch of the multiverse where they are having a different local experience. It is commonly thought that communication across branches of the multiverse is universally forbidden as it would violate the linearity of quantum theory \cite{Deutsch1997,Polchinski1991}. Here we challenge that notion with an explicit protocol that enables exchange of messages between observers across branches, within fully unitary, linear quantum theory.

The impossibility of inter-branch communication appears to follow intuitively from decoherence theory \cite{Zurek2003,Schlosshauer2005}. Once an observer measures a quantum system in superposition, and then interacts with a complex environment, then the global superposition of the system, observer and environment branches into two approximately independently evolving worlds \cite{Zurek2003}. The emergence of these approximately independent branches is now well-established, though there is continuing research on various aspects of the emergence of apparent classicality (e.g. \cite{Riedel2017}). To achieve inter-branch communication within standard linear, unitary quantum theory, we drop the assumption that the decohered observer branches cannot be coherently controlled. This assumption does not hold in so-called ``Wigner's friend'' scenarios, a family of thought experiments which were originally used to investigate the measurement problem in quantum mechanics \cite{Wigner1961}. These have since become a popular tool in the quantum foundations community for formulating and testing core assumptions of quantum theory and potential successor theories (e.g. \cite{Frauchiger2018,Brukner2018,Bong2020}).

Our thought experiment to demonstrate the possibility of inter-branch communication works as follows. The protocol involves an observer inside an isolated laboratory subject to global control, that transports a classical $n$-bit message from one branch to another. The protocol works by (i)~creating a superposition of the observer, (ii)~letting the observer in one branch write a message, and (iii)~applying a global unitary that exchanges the observer branches. Afterwards the observer who never wrote the message nevertheless possesses it, while the observer who wrote it no longer has the message. This observer retains only (at most) the fact of having written something.

We present a unitary implementation of the protocol using a simple quantum circuit toy-model, and discuss how issues of knowledge-creation mean that this protocol provides a novel test of many-worlds against single-world accounts of quantum theory.

\section{Protocol for inter-branch communication}\label{sec:protocol}

We consider an outside observer, Wigner, who has quantum control over a laboratory containing his friend.
The global Hilbert space consists of five subsystems, ordered as follows:
the measured qubit $Q$, a room-label register $R$, the friend $F$, the friend's memory register $M$, and a piece of paper $P$. The register $R$ functions as a classical record indicating which room the friend occupies. It serves as a branch label, analogous to a measurement pointer indicating the friend's location. The protocol is visualised in Fig. \ref{fig:branch-swap}, with a quantum circuit depiction in Fig. \ref{fig:full-protocol-circuit}.

\begin{figure*}
    \centering
    \includegraphics[width=0.8\linewidth]{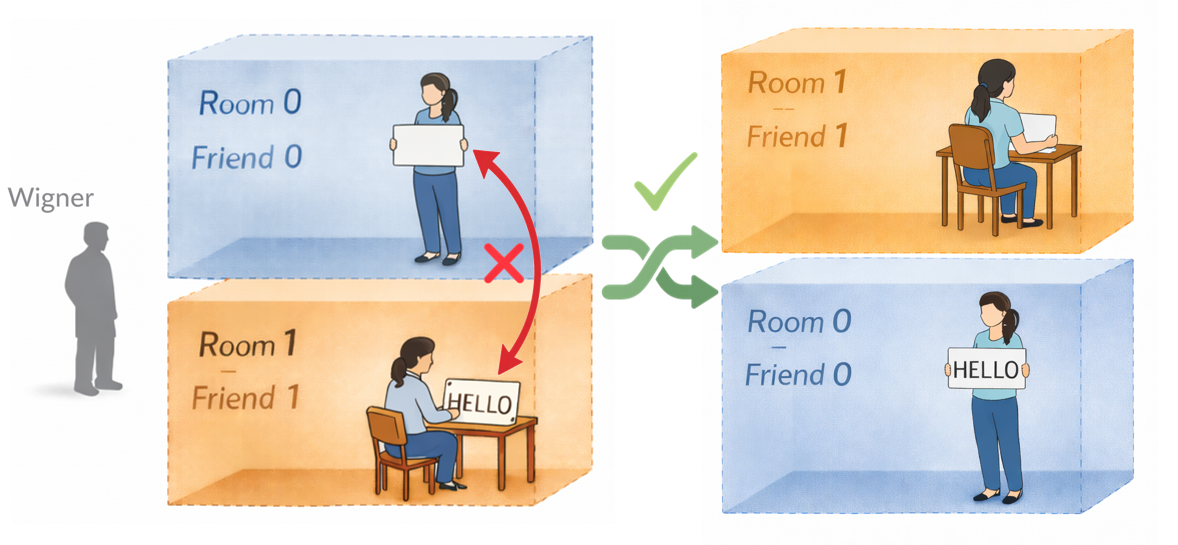}
    \caption{Hello Worlds: Wigner cannot exchange the friends' messages, but can instead switch the friends' branches such that one obtains a message from the other.}
    \label{fig:branch-swap}
\end{figure*}

The friend begins in the state $\ket{0}_F$, while the qubit to be measured is prepared in the state
\begin{equation} \label{state_1}
\ket{+}_Q = \tfrac{1}{\sqrt{2}}(\ket{0}_Q+\ket{1}_Q).
\end{equation}
After the friend measures the qubit, which can be modelled by a controlled-NOT (\textsc{cnot}) gate controlled on the qubit and targeted on the friend, the joint state of $Q$ and $F$ becomes
\begin{equation} \label{state_2}
\tfrac{1}{\sqrt{2}}(\ket{00}_{QF}+\ket{11}_{QF}).
\end{equation}

The friend was told before the experiment that if they see the outcome 0, they enter a room labeled ``0'', and if they see the outcome 1, they enter a different room labeled ``1''. In our global statevector representation, we account for the friend's location by including the room-label register $R$, and this room record interacts with the friend via a \textsc{cnot} gate controlled on $F$ and targeted on $R$.
Including this record explicitly, the global state becomes
\begin{equation} \label{state_3}
\tfrac{1}{\sqrt{2}}(\ket{000}_{QFR}+\ket{111}_{QFR}).
\end{equation}

We refer to the friend in the $R=0$ branch as friend-0 and in the $R=1$ branch as friend-1. Each branch contains a blank piece of paper, initially in the state $\ket{0}_P$, and the friend's memory register begins in the state $\ket{0}_M$.

The friend was instructed in advance that if they observe outcome $1$, they should write a message for friend-0 on their paper, while if they observe outcome $0$ they should leave the paper blank. We denote the classical message content by $\mu$, and write $\ket{\mu}_M$ and $\ket{\mu}_P$ for its encoding in the memory and paper registers respectively.

The friend encodes $\mu$ into their memory, evolving via a controlled-$\mu$ gate (controlled on $F$ and targeted on $M$):

\begin{equation} \label{state_4}
\tfrac{1}{\sqrt{2}}
\bigl(
\ket{0000}_{QRFM}
+
\ket{111\mu}_{QRFM}
\bigr).
\end{equation}

Then they write this message to the paper, evolving via a \textsc{cnot} gate (controlled on $M$ and targeted on $P$), giving the global state: 

\begin{equation} \label{state_5}
\tfrac{1}{\sqrt{2}}
\bigl(
\ket{0000}_{QRFM}\ket{0}_P
+
\ket{111\mu}_{QRFM}\ket{\mu}_P
\bigr).
\end{equation}

Next Wigner uncomputes the memory via a \textsc{cnot} from $P$ to $M$. This uncomputation is independent of the value of $\mu$ and returns the memory register to $\ket{0}_M$ in both branches. The  resulting global state is
\begin{equation} \label{state_6}
\tfrac{1}{\sqrt{2}}
\bigl(
\ket{0000}_{QRFM}\ket{0}_P
+
\ket{1110}_{QRFM}\ket{\mu}_P
\bigr).
\end{equation}

Wigner then applies a global unitary that exchanges all degrees of freedom distinguishing the two branches except for the paper. In the abstraction we have used here for the degrees of freedom of the distinguishing features of the branches, this operation is
\begin{equation} \label{state_7}
U_{\rightleftharpoons} = X_Q \otimes X_R \otimes X_F,
\end{equation}
which swaps the friend, the measured qubit state, and the corresponding room-label record while acting trivially on the memory and paper. We refer to this as a partial branch-swap operation, depicted by a unitary with a subscript of opposing half-arrows $\rightleftharpoons$ to represent the partial exchange of branches.

After applying $U_{\rightleftharpoons}$, the global state becomes
\begin{equation} \label{state_8}
\tfrac{1}{\sqrt{2}}
\bigl(
\ket{1110}_{QRFM}\ket{0}_P
+
\ket{0000}_{QRFM}\ket{\mu}_P
\bigr).
\end{equation}
At this point, the paper in the $R=0$ branch contains the message $\mu$, while the paper in the $R=1$ branch is blank.\\

\begin{figure}[h]
\centering
\begin{quantikz}[row sep=0.2cm]
\ket{+}_Q & \ctrl{2} & \qw & \qw & \qw & \qw & \gate{X} 
\gategroup[wires=3,steps=1,
    style={dashed,rounded corners,inner xsep=2pt,inner ysep=3pt}]{} & \qw \\
\ket{0}_R & \qw & \targ{} & \qw & \qw & \qw & \gate{X} & \qw \\
\ket{0}_F & \targ{} & \ctrl{-1} & \ctrl{1} & \qw & \qw & \gate{X} & \qw \\
\ket{0}_M & \qw & \qw & \gate{\mu} & \ctrl{1} & \targ{} & \qw & \qw \\
\ket{0}_P & \qw & \qw & \qw & \targ{} & \ctrl{-1} & \qw & \qw
\end{quantikz}
\caption{Quantum circuit implementing the inter-branch message-transfer protocol.
The room-label register $R$ records the friend's location via a \textsc{cnot} from $F$ to $R$.
The dashed box encloses the partial branch-swap operation $X_Q \otimes X_R \otimes X_F$.
All operations applied by Wigner are independent of the message value $\mu$.}
\label{fig:full-protocol-circuit}
\end{figure}
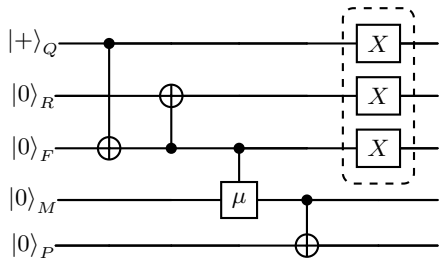

We can summarise the conclusion of this protocol in the following theorem: 

\begin{theorem}[Inter-branch message transfer]
There exists a global quantum operation such that, starting from a superposition of two decohered branches labelled by $R=0$ and $R=1$, a classical message $\mu$ created locally by the observer in the $R=1$ branch is transferred to a subsystem in the $R=0$ branch which can be read by the observer in that branch.
The global operation implementing this transfer is independent of the value of $\mu$.
\end{theorem}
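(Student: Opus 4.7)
The plan is to prove the theorem constructively by exhibiting the global unitary $W = U_{\rightleftharpoons} \cdot \mathrm{CNOT}_{P\to M}$ and verifying the state-transfer claim by direct computation on the five-register state. The construction is precisely the sequence already displayed between equations \eqref{state_5} and \eqref{state_8}; what remains is to package it as a proof and to cleanly separate the $\mu$-dependent operations (performed locally by the friend) from the $\mu$-independent global transfer (performed by Wigner).

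First I would take the post-writing state \eqref{state_5} as the starting point for Wigner's intervention and apply $\mathrm{CNOT}_{P\to M}$ term by term. In the $R=0$ branch the paper holds $\ket{0}$, so the CNOT leaves $M$ untouched; in the $R=1$ branch both $M$ and $P$ hold $\ket{\mu}$, so bitwise XOR sends $M$ to $\ket{0}$. This yields \eqref{state_6} regardless of the value of $\mu$. I would then apply $X_Q \otimes X_R \otimes X_F$ componentwise, bit-flipping $Q$, $R$, and $F$ in both branches while leaving $M$ and $P$ untouched, obtaining exactly \eqref{state_8}.

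Next I would interpret \eqref{state_8} operationally. The component with $R=0$ has paper register $\ket{\mu}_P$, so the observer in that branch reads $\mu$ by a local measurement on $P$; the $R=1$ component has blank paper. The $\mu$-independence claim is then immediate by inspection: $\mathrm{CNOT}_{P\to M}$ and $X_Q \otimes X_R \otimes X_F$ contain no $\mu$-dependent parameters, and the only $\mu$-dependent gate in the full circuit is the friend's local controlled-$\mu$ write in \eqref{state_4}, which is not part of $W$.

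The main difficulty is conceptual rather than computational: arguing that \eqref{state_8} describes a genuine cross-branch \emph{transfer} rather than a relabelling of kets. I would address this by emphasising that $U_{\rightleftharpoons}$ acts non-trivially on every subsystem that distinguishes the two branches ($Q$, $R$, $F$) except the paper, so $P$ functions as an invariant record of what was locally written in the $R=1$ branch while the branch-labelling degrees of freedom physically exchange. Consequently, after $W$ acts the message $\mu$ is co-located with the observer now occupying $R=0$, who has no memory of writing it, whereas the observer now occupying $R=1$ retains neither memory nor paper record of $\mu$.
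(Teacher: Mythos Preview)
Your proposal is correct and follows essentially the same constructive route as the paper: apply the memory-uncomputation $\mathrm{CNOT}_{P\to M}$ to pass from \eqref{state_5} to \eqref{state_6}, then apply $U_{\rightleftharpoons}=X_Q\otimes X_R\otimes X_F$ to obtain \eqref{state_8}, and read off that the $R=0$ branch now carries $\ket{\mu}_P$. Your packaging of Wigner's intervention as the single $\mu$-independent unitary $W=U_{\rightleftharpoons}\cdot\mathrm{CNOT}_{P\to M}$, together with the explicit term-by-term verification and the closing paragraph on transfer versus relabelling, is slightly more detailed than the paper's own proof, but the logical content is identical.
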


\begin{proof}
The result follows by explicit construction. 

Starting from the initial state in Eq.~\eqref{state_2}, the friend's location is recorded in the room-label register via a controlled operation from $F$ to $R$, resulting in Eq.~\eqref{state_3}. The friend in the $R=1$ branch then thinks of the message $\mu$ in their memory (Eq.~\eqref{state_4}) and writes it on the paper, giving Eq.~\eqref{state_5}. Then the friend's memory is uncomputed, yielding the state in Eq.~\eqref{state_6}.
Applying the partial branch-swap unitary $U_{\rightleftharpoons}$ defined in Eq.~\eqref{state_7} produces the state in Eq.~\eqref{state_8}, in which the paper in the $R=0$ branch contains $\mu$.
No operation in the protocol depends on the value of $\mu$.
\end{proof}

The protocol generalises straightforwardly to an $n$-bit classical message by promoting the memory register $M$ and paper $P$ to $n$-qubit systems, and the controlled-$\mu$ operation is modified so $\mu$ is an $n$-qubit message-encoding operation. The \textsc{cnot}s between the memory and paper occur transversally between the $n$ qubits in each state (meaning that there is a \textsc{cnot} between the first qubit in each set, one between the 2nd qubit, etc).

The resulting generalised quantum circuit for transporting an $n$-bit classical message between multiverse branches is that in Fig.~\ref{fig:full-protocol-circuit-n}.

\begin{figure}[h]
\centering
\begin{quantikz}[row sep=0.2cm]
\ket{+}_Q & \ctrl{2} & \qw & \qw & \qw & \qw & \gate{X} 
\gategroup[wires=3,steps=1,
    style={dashed,rounded corners,inner xsep=2pt,inner ysep=3pt}]{} & \qw \\
\ket{0}_R & \qw & \targ{} & \qw & \qw & \qw & \gate{X} & \qw \\
\ket{0}_F
  & \targ{}
  & \ctrl{-1}
  & \ctrl{1}
  & \qw
  & \qw
  & \gate{X}
  & \qw \\
\ket{0}_M^{\otimes n}
  & \qwbundle{n}
  & \qw
  & \gate{\mu}
  & \ctrl{1}
  & \targ{}
  & \qw
  & \qw \\
\ket{0}_P^{\otimes n}
  & \qwbundle{n}
  & \qw
  & \qw
  & \targ{}
  & \ctrl{-1}
  & \qw
  & \qw
\end{quantikz}
\caption{Generalisation of the protocol to an $n$-qubit message.
The room-label register $R$ records the friend's branch via a controlled operation from $F$ to $R$.
The dashed box again applies $X_Q \otimes X_R \otimes X_F$, transporting the message across branches.}
\label{fig:full-protocol-circuit-n}
\end{figure}
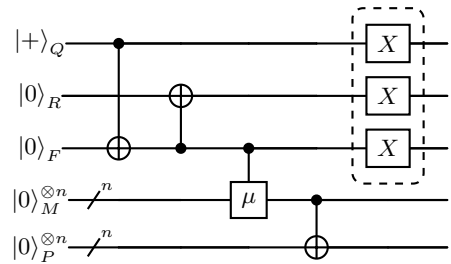

\section{Friends cannot remember their messages}

We now prove the necessity of the memory-uncomputation step for the success of the protocol in
Sec.~\ref{sec:protocol}.
If the friend retains any memory of the message, then inter-branch
message transfer is impossible under the constraint that Wigner’s
operations are independent of the message content.

\begin{corollary}[Memory erasure is necessary for inter-branch communication]
There is no protocol in which a message $\mu$ is transferred from the
$R=1$ branch to the $R=0$ branch while the friend who created the
message retains a memory record of $\mu$, if all operations applied by
Wigner are independent of $\mu$.
\end{corollary}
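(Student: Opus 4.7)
The plan is a linearity argument: a message-independent unitary cannot map two non-orthogonal states to two orthogonal ones. I would fix two distinct classical messages $\mu_1 \neq \mu_2$, so that the computational-basis kets $\ket{\mu_1}$ and $\ket{\mu_2}$ are orthogonal, and consider the global state immediately after the friend writes but with no memory uncomputation; this is Eq.~\eqref{state_5},
\begin{equation*}
\ket{\psi_\mu} = \tfrac{1}{\sqrt{2}}\bigl(\ket{0000}_{QRFM}\ket{0}_P + \ket{111\mu}_{QRFM}\ket{\mu}_P\bigr).
\end{equation*}
Since friend-0 performs no $\mu$-dependent operation, the $R=0$ summand is identical for every $\mu$, and a direct calculation gives $\langle\psi_{\mu_1}|\psi_{\mu_2}\rangle = \tfrac{1}{2}$.

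Next, I would characterize any $\mu$-independent Wigner unitary $W$ claimed to achieve the transfer while preserving the creator's memory. The paper-in-$R=0$ condition forces the $R=0$ summand of $W\ket{\psi_\mu}$ to have its $P$ register in $\ket{\mu}_P$, and memory retention forces the $R=1$ summand to have its $M$ register in $\ket{\mu}_M$. Expanding $W\ket{\psi_{\mu_i}}$ over $\ket{0}_R$ and $\ket{1}_R$ and forming $\langle W\psi_{\mu_1}|W\psi_{\mu_2}\rangle$, the cross terms vanish from $\langle 0|1\rangle_R = 0$, the $R=0$ diagonal vanishes from $\langle\mu_1|\mu_2\rangle_P = 0$, and the $R=1$ diagonal vanishes from $\langle\mu_1|\mu_2\rangle_M = 0$. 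Hence $\langle W\psi_{\mu_1}|W\psi_{\mu_2}\rangle = 0$.

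Unitarity of $W$ then requires $\langle W\psi_{\mu_1}|W\psi_{\mu_2}\rangle = \langle\psi_{\mu_1}|\psi_{\mu_2}\rangle = \tfrac{1}{2}$, contradicting orthogonality. The main subtlety I would need to address is fixing the precise meaning of ``the friend who created the message retains a memory record of $\mu$''. A natural strict reading is that the $M$ register within the $R=1$ branch contains $\mu$, for which the argument is exactly as above; weaker readings that only require $\mu$ to be stored coherently in some subsystem accessible to the creator still produce enough $\mu$-orthogonality, alongside the paper-in-$R=0$ condition, to force the same contradiction with unitarity.
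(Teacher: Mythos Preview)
Your proof is correct and takes a genuinely different route from the paper. The paper's argument is operational and narrow: it simply applies the specific partial branch-swap $U_{\rightleftharpoons}$ with the memory-erasure step omitted, observes that in the resulting state the friends' bodies and memories are misaligned (``the friends' minds have been mixed''), and concludes that this particular operation fails; the obvious repair of also swapping the memory contents is then ruled out separately by Lemma~\ref{lemma:memory}. Your approach instead runs a no-cloning-style inner-product argument that covers all $\mu$-independent Wigner unitaries at once: because friend-0 performs nothing $\mu$-dependent, the $R=0$ summand of the pre-Wigner state is identical for every $\mu$, forcing $\langle\psi_{\mu_1}|\psi_{\mu_2}\rangle=\tfrac12$, while the joint success conditions (paper $=\ket{\mu}_P$ in $R=0$, memory $=\ket{\mu}_M$ in $R=1$) make the outputs orthogonal, contradicting unitarity of $W$. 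This is both cleaner and strictly more general than the paper's proof, which really only treats $U_{\rightleftharpoons}$ together with one family of modifications. The one point you correctly flag as needing care---which $R$-sector the creator occupies after $W$---does have to be pinned down; the reading you adopt (recipient in $R=0$, creator in $R=1$) matches the corollary's phrasing ``transferred from the $R=1$ branch to the $R=0$ branch'' and makes the argument go through verbatim, and as you note, weaker readings still place a $\mu$-orthogonal record in each $R$-sector and yield the same contradiction.
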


\begin{proof}
We proceed by contradiction.

Consider omitting the memory-uncomputation step from the protocol.
After the friend in the $R=1$ branch creates the message $\mu$ in
their memory and writes it to the paper, the global state takes the form
\begin{equation}
\ket{\psi}
=
\tfrac{1}{\sqrt{2}}
\bigl(
\ket{000}_{QRF}\ket{0}_M\ket{0}_P
+
\ket{111}_{QRF}\ket{\mu}_M\ket{\mu}_P
\bigr).
\end{equation}

Suppose Wigner now applies $U_{\rightleftharpoons}$, from Eq.~\eqref{state_7}. The resulting state is
\begin{equation}
U_{\rightleftharpoons}\ket{\psi}
=
\tfrac{1}{\sqrt{2}}
\bigl(
\ket{111}_{QRF}\ket{0}_M\ket{0}_P
+
\ket{000}_{QRF}\ket{\mu}_M\ket{\mu}_P
\bigr).
\end{equation}

In this state, the friend occupying the $R=0$ branch carries the memory
$\ket{\mu}_M$ that was created in the $R=1$ branch. The resulting state has friend-0 carrying the memory created by friend-1, and vice versa. As a result, the friends’ cognitive states are no longer independently well-defined within each branch. The global state no longer contains a branch that can be identified as
``friend-0 with an additional received message''.
Instead, the friends' minds have been mixed and they are no longer independently well-defined. Therefore without erasing the memory record, the protocol fails to realise
inter-branch message transfer as defined in Theorem~1.
\end{proof}

The above argument shows that if the memory is left intact, swapping
the branches cannot transfer a record between well-defined observers.

One might attempt to repair this by modifying the branch-swap
operation so as to exchange friend-0 and friend-1 while preserving
their respective memories.
We now show that this is impossible unless Wigner applies
message-dependent operations.

\begin{lemma}[No message-independent memory-preserving branch swap] \label{lemma:memory}
There is no unitary operation, independent of $\mu$, that exchanges the
states $\ket{0}_M$ and $\ket{\mu}_M$ while leaving the rest of the
memory Hilbert space invariant.
\end{lemma}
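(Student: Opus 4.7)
The plan is to argue by contradiction, exploiting the fact that a $\mu$-independent unitary is a single fixed operator and therefore cannot map the state $\ket{0}_M$ to more than one distinct message state.

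First I would assume for contradiction that such a unitary $U$ exists, so that $U\ket{0}_M=\ket{\mu}_M$ and $U\ket{\mu}_M=\ket{0}_M$ for every admissible message $\mu$. Since the protocol is introduced in the multi-bit message setting of Fig.~\ref{fig:full-protocol-circuit-n}, the message register supports at least two distinct nonzero values $\mu_1\neq\mu_2$. Applying the hypothesis with $\mu=\mu_1$ and with $\mu=\mu_2$ yields $U\ket{0}_M=\ket{\mu_1}_M$ and simultaneously $U\ket{0}_M=\ket{\mu_2}_M$. But $\ket{\mu_1}_M$ and $\ket{\mu_2}_M$ are distinct (indeed orthogonal) computational basis states, contradicting the single-valuedness of $U$.

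I would then add a brief remark to reinforce the conclusion from the standpoint of linearity: any fixed $U$ acts linearly, so its image of $\ket{0}_M$ is a uniquely determined vector, whereas the desired behaviour requires that image to coincide with every message state simultaneously. This also rules out more exotic attempts, such as trying to engineer $U$ from a superposition of conditioned swaps, since any such construction would either fix a particular weighting (and hence fail for other $\mu$) or introduce $\mu$-dependence in violation of the hypothesis.

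The main obstacle is purely conceptual rather than technical: one must clearly separate the family of $\mu$-dependent swap unitaries $U_\mu$ (each of which trivially exists as a two-level permutation on $\{\ket{0}_M,\ket{\mu}_M\}$) from the single $\mu$-independent $U$ that Wigner must commit to before knowing $\mu$. Once this distinction is made explicit, the contradiction reduces to the observation that a function cannot send one input to two different outputs, and no further calculation is needed.
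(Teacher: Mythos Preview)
Your proof is correct and takes a somewhat different route from the paper's. The paper proceeds by writing down the explicit form of the required swap unitary, $G = \ket{\mu}\bra{0} + \ket{0}\bra{\mu} + \sum_j \ket{e_j}\bra{e_j}$, and then observes that this operator depends explicitly on $\ket{\mu}$, so Wigner---who by assumption has no access to the message content---cannot implement it. Your argument bypasses the construction entirely: you evaluate the hypothetical fixed $U$ on the single vector $\ket{0}_M$ and obtain two different images $\ket{\mu_1}_M$ and $\ket{\mu_2}_M$, which is already a contradiction. This is more elementary and in fact slightly stronger, since it does not invoke the ``leaving the rest of the Hilbert space invariant'' clause at all; that clause is what the paper uses to pin $G$ down uniquely, whereas your contradiction arises from the action on $\ket{0}_M$ alone. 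The paper's approach, on the other hand, has the advantage of exhibiting the $\mu$-dependence concretely, which feeds directly into the physical narrative about why Wigner would need to learn $\mu$ (and hence entangle himself with the branches) in order to perform the swap.
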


\begin{proof}
To preserve the friends’ memories while swapping branches, Wigner
would require a unitary $G$ acting on the memory register such that
\begin{equation}
G\ket{0}_M = \ket{\mu}_M,
\qquad
G\ket{\mu}_M = \ket{0}_M .
\end{equation}

When $\ket{0}_M$ and $\ket{\mu}_M$ are orthogonal—which is necessarily
the case when $\mu$ represents a classical message created by the
friend—the desired action on their span uniquely determines the unitary
\begin{equation}
G
=
\ket{\mu}\!\bra{0}
+
\ket{0}\!\bra{\mu}
+
\sum_{j} \ket{e_j}\!\bra{e_j},
\end{equation}
where $\{\ket{e_j}\}$ completes $\{\ket{0},\ket{\mu}\}$ to an orthonormal
basis.

However, this operator depends explicitly on $\ket{\mu}$.
Since $G$ must be applied by Wigner, who by assumption has no access to
the message content, such a unitary cannot be implemented.
Any attempt by Wigner to learn $\mu$ would require measuring the
friend, thereby entangling himself with the branches and destroying his
ability to perform the required global operation.

Hence no message-independent unitary can preserve the friend’s memory
while swapping branches.
\end{proof}

Note that in the original protocol, the only operation that depends on $\mu$ is the controlled-$\mu$ gate describing the dynamics of the friend’s act of
creating the message.
All operations performed by Wigner must be independent of $\mu$.
This is precisely why swapping the \emph{observer} between branches is
possible, whereas swapping the \emph{message itself} is not. Finally, we note a closely related limitation concerning branch
amplitudes:

\begin{corollary}[Message-branch amplitudes cannot be modified]
It is impossible to change the amplitude of the branch containing the
message $\mu$ using only message-independent unitaries.
\end{corollary}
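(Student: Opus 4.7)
The plan is to prove this corollary by contradiction, closely paralleling the structure of Lemma~\ref{lemma:memory}. I would begin by supposing that a message-independent unitary $U$ exists which, when applied to the protocol's state $\ket{\psi_\mu} = \alpha\ket{A}\ket{0}_P + \beta\ket{B}\ket{\mu}_P$ (with $\ket{A},\ket{B}$ orthonormal, $\mu$-independent states, and $\alpha=\beta=1/\sqrt{2}$ in the protocol), outputs a state in which the branch carrying $\ket{\mu}_P$ has amplitude magnitude $|\beta'|\neq|\beta|$. The first step is to pin down what ``changing the amplitude of the message branch'' operationally means: namely, that $U\ket{\psi_\mu}$ continues to admit a decomposition of the form $\alpha'\ket{A'_\mu}\ket{0}_P + \beta'\ket{B'_\mu}\ket{\mu}_P$, possibly with $\ket{A'_\mu},\ket{B'_\mu}$ depending on $\mu$, but with $|\alpha'|,|\beta'|$ independent of $\mu$ by the message-independence of $U$.

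The core of the argument is then an inner-product comparison. I would pick two distinct messages $\mu_1,\mu_2$, each orthogonal to $\ket{0}_P$ and to each other, and compute $\langle\psi_{\mu_1}|\psi_{\mu_2}\rangle = |\alpha|^2$ directly from the structure of $\ket{\psi_\mu}$. Unitarity of $U$ forces $\langle U\psi_{\mu_1}|U\psi_{\mu_2}\rangle = |\alpha|^2$. Expanding the left-hand side using the assumed structural form of the post-$U$ state, and using orthogonality of $\ket{\mu_1}_P,\ket{\mu_2}_P,\ket{0}_P$, the cross terms carrying $\ket{\mu_i}_P$ vanish, leaving only a contribution of modulus $|\alpha'|^2$. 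Equating gives $|\alpha'|=|\alpha|$, whence $|\beta'|=|\beta|$, contradicting the hypothesis.

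The main obstacle I anticipate is ruling out the escape route in which $U$ evades the structural assumption by scattering the paper register into superpositions of $\ket{0}_P$ and $\ket{\mu}_P$, so that there is no well-defined ``branch containing $\mu$'' to reassign amplitude to. I would handle this by arguing that such a $U$ does not constitute a change of branch amplitude in the operational sense relevant to the protocol, because no observer in the post-$U$ state can read the paper and unambiguously obtain $\mu$ or $0$. To close the loop I would then invoke the reasoning of Lemma~\ref{lemma:memory}: any operator which selectively distinguishes $\ket{\mu}_P$ from $\ket{0}_P$, and hence is capable of preferentially rescaling one component over the other, must have matrix elements that explicitly depend on $\ket{\mu}$, which is forbidden by the message-independence constraint on Wigner's operations.
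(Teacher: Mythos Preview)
Your approach differs substantially from the paper's. The paper does not use an inner-product argument at all: it starts from a state with unequal branch amplitudes $\sqrt{1/3}$ and $\sqrt{2/3}$, observes that the partial branch-swap $U_{\rightleftharpoons}$ leaves the amplitude attached to $\ket{\mu}_P$ unchanged, and then asserts that exchanging the amplitudes on $\ket{0}_P$ and $\ket{\mu}_P$ would require a unitary swapping those two paper states, which by Lemma~\ref{lemma:memory} must depend on $\mu$. Your final paragraph---where you invoke Lemma~\ref{lemma:memory} to argue that any operator selectively rescaling $\ket{\mu}_P$ relative to $\ket{0}_P$ carries explicit $\mu$-dependence---is essentially this same reduction, and is the part of your proposal that actually matches the paper's reasoning.

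Your inner-product argument, however, has a gap. Expanding $\langle U\psi_{\mu_1}\,|\,U\psi_{\mu_2}\rangle$ under your assumed structural form gives $|\alpha'|^2\,\langle A'_{\mu_1}|A'_{\mu_2}\rangle$, not simply $|\alpha'|^2$, because you explicitly allowed $\ket{A'_\mu}$ to depend on $\mu$. Since $|\langle A'_{\mu_1}|A'_{\mu_2}\rangle|\le 1$, equating with $|\alpha|^2$ yields only $|\alpha'|\ge|\alpha|$, i.e.\ $|\beta'|\le|\beta|$. This shows a message-independent unitary cannot \emph{increase} the message-branch amplitude, but does not by itself rule out a decrease; the contradiction you claim does not follow. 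To close the gap you would need either to argue that $\ket{A'_\mu}$ is $\mu$-independent up to phase, or to supply a separate argument for the reverse inequality. Absent that, your proof effectively collapses to its last paragraph, which is just the paper's direct appeal to Lemma~\ref{lemma:memory}.
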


\begin{proof}
Consider an initial superposition with unequal amplitudes,
\begin{equation}
\sqrt{\tfrac{1}{3}}\,\ket{000}_{QRF}
+
\sqrt{\tfrac{2}{3}}\,\ket{111}_{QRF},
\end{equation}
and follow the protocol in Sec.~\ref{sec:protocol} up to the memory-uncomputation step.
The resulting state is
\begin{equation}
\ket{\psi}
=
\sqrt{\tfrac{1}{3}}\,\ket{0000}_{QRFM}\ket{0}_P
+
\sqrt{\tfrac{2}{3}}\,\ket{1110}_{QRFM}\ket{\mu}_P .
\end{equation}

Applying the partial branch-swap unitary exchanges the amplitudes of the
branches that the friends each find themselves in, but leaves the amplitude of the branch containing the message $\mu$ unchanged.
To instead exchange the amplitudes associated with $\ket{0}_P$ and
$\ket{\mu}_P$ would require a unitary that maps $\ket{0}_P$ to
$\ket{\mu}_P$ and vice versa.
As shown in the proof of Lemma~\ref{lemma:memory}, any such operation necessarily depends on $\mu$.

Therefore, message-independent dynamics cannot modify the amplitude of
the message branch.
\end{proof}

\section{Discussion}

\subsection{Testing many- vs single-world theories using a knowledge-paradox}

There are thought experiments involving modifications to quantum mechanics that enable the emergence of knowledge paradoxes, with a paradigmatic example being in the context of closed-timelike-curves (i.e. time-travel) \cite{Deutsch1991,Bennett2009,Aaronson2009,Lloyd2011}. By imposing a consistency condition, it is possible to resolve Grandfather paradoxes, which are logically inconsistent timelines such as where you travel back in time and kill your grandfather before you were born \cite{Deutsch1991}. However, the condition does not by itself straightforwardly resolve \textit{knowledge paradoxes}. These are time-travel paradoxes in which someone copies a mathematical proof from the future, and that copied proof then becomes the very proof that exists in the future \cite{Deutsch1991}. Such scenarios are logically consistent, and compatible with the consistency condition, but there is no explanation for where the knowledge of the proof came from; hence a knowledge paradox.

We can construct a similar form of paradox within the inter-branch communication protocol, such that the generation of knowledge provides a test of the multiverse account of quantum theory against single-world accounts. Consider participating in the experiment, from the perspective of friend-0. The friend has a blank sheet of paper, and then after Wigner performs the operation that swaps the friends' branches, friend-0 sees a message from friend-1 appear on the paper. In principle, this message could contain new knowledge, such as a novel mathematical proof. Friend-0 requires an explanation for how that knowledge was created, and if the only known mechanism for generating knowledge is another agent, then it provides evidence that friend-1 physically existed in a form that is just as real as friend-0.

This test has a similar spirit to Deutsch's 1985 thought experiment for testing unitary quantum theory (including the Everettian interpretation) against theories where measurement causes an irreversible collapse (such as objective collapse theories) \cite{Deutsch1985}. However the inter-branch communication test may distinguish between a different class of rival theories, because explaining the generation of knowledge requires the real existence of friend-1 in another branch, not just unitary dynamics where measurements are reversible. Thus, the former thought experiment distinguishes collapse and no-collapse theories, while the latter distinguishes single-world and many-world theories (more precisely: worlds where all branches of the quantum state are physically real and can thus sustain knowledge-creating observers, and worlds where multiple observers in superposition do not all physically exist). An example would be to distinguish many-worlds and Bohmian interpretations of quantum mechanics \cite{Goldstein2017}, which are indistinguishable by many conceivable empirical tests due to both having unitary dynamics and being no-collapse theories.

\subsection{Does inter-branch communication violate linearity?}

Deterministic nonlinear modifications of quantum mechanics provide a useful point of comparison, such as those introduced by Weinberg \cite{Weinberg1989,Weinberg1989b}. Nonlinear dynamics can make the evolution of a subsystem depend on the global state decomposition rather than solely on its reduced density matrix, and hence local operations can influence observables associated with other decohered branches. Polchinski showed that if Weinberg-type nonlinear dynamics are constrained to forbid superluminal signalling, they generically permit communication between decohered Everettian branches, a phenomenon he termed an “Everett phone” \cite{Polchinski1991}. In that protocol, the preparation and reception of a message occur in different branches of the observer’s wavefunction.

Polchinski emphasised that such inter-branch communication may be conceptually even more unsettling than faster-than-light signalling, remarking that “communication between branches of the wave function seems even more bizarre than faster-than-light communication … but it is not clear that it represents an actual inconsistency.” Polchinski’s result suggested that inter-branch communication was a possible pathology of nonlinear dynamics; our protocol shows that this bizarre form of information transfer can be accessed even within strictly linear quantum mechanics, given sufficient global control.

\subsection{Implementation and complexity}

One may consider what the difficulty is of actually implementing such an experiment. A simple qubit simulation of the protocol can likely already be done on today's quantum computers, given it only requires five qubits and a shallow circuit consisting of a small number of standard gates. To become more realistic, the size of the message can be increased, requiring a more complex encoding unitary and uncomputation of the friend's memory.

Another significant consideration for performing such a protocol is how different friend-0 and friend-1 become in their branches by the time Wigner needs to swap them. In the protocol presented, flipping between friend-0 and friend-1 is achieved simply by an X-gate. Imagine that friend-0 and friend-1 can be represented by bit-strings, and the bit-strings become more unique as friend-0 and friend-1 evolve in their respective branches. For example, the friends could both begin in the state $\ket{0111100001}$ before branching, and after branching are in the states $\ket{0101110101}$ and $\ket{1101100100}$. In this case, the partial branch-swap operation applied by Wigner to the friend would be $X_1 X_6 X_{10}$. The more distinct friend-0 and friend-1 become, the more complex the string of $X$ operations required by Wigner to flip between their states; loosely speaking, ``twins'' can communicate more easily across branches than ``cousins'' can, due to reduced complexity of the swapping operation for the former. This reinforces why it is vital that Wigner has detailed knowledge of the evolution of the friends in the branches, outside of the content of the message.

It is useful to situate our protocol within the broader landscape of implementation strategies for quantum observer-based experiments. In the near term, the inter-branch communication protocol can be explored further using the approach developed in related pedagogical work by the author, which demonstrates how quantum thought experiments involving observers can be encoded as explicit quantum circuits, and varied to probe conceptual features of quantum paradoxes on existing quantum hardware \cite{Violaris2023PhysicsLab, Violaris2024EntanglingDisciplines}. There is also growing interest in the scientific community to extend beyond such toy models and implement increasingly realistic Wigner’s-friend–type experiments on quantum information processing platforms, including explicit proposals towards encoding an observer within a quantum computer \cite{Zeng2025,Bong2020,WisemanCavalcantiRieffel2023}. These efforts suggest a pathway by which inter-branch communication protocols of the type proposed here may ultimately be explored experimentally.

\subsection{Predictability and identity}

Considering the level of detail Wigner requires to know about the branches leads to an interesting subtlety: if Wigner knows friend-1's state in exact detail, including full details of friend-1's experiences in her branch, then could Wigner deterministically predict the contents of the message which friend-1 comes up with? On the assumption that knowledge-creation is fundamentally unpredictable, it is possible that friend-1 comes up with a message that Wigner cannot predict, despite knowing the full details of friend-1's state. We could similarly consider friend-0 attempting to predict the contents of friend-1's message, thus directly bypassing the need for communicating between branches in order to retrieve the message. Again, if we assume that friend-1 has had a different experience in her branch to friend-0, then friend-1's knowledge creation becomes unpredictable to friend-0. Even if it was somehow the case that Wigner and friend-0 could deterministically predict the contents of friend-1's message, this would not falsify the inter-branch communication protocol. Instead it would render its existence trivial, considering all three agents involved can deduce the contents of the message with no inter-branch communication required.

The protocol also raises philosophical questions regarding the continuity of identity. Here we identified friend-0 and friend-1 with their abstract information content represented here by the $\ket{0}$ and $\ket{1}$ states. However an alternative interpretation of the protocol could be that the swapping operation is actually transforming friend-0 into friend-1, and friend-1 into friend-0, rather than moving the friends (and by proxy, the messages) between branches. These interpretations are indistinguishable at this level of abstraction, creating an interesting problem to analyse in terms of the philosophy of identity across multiverse branches, and how the precise implementation of the swapping operation may play a role (if at all).

\section{Conclusion}\label{sec:conclusion}
%%%%%%%%%%%%%%%%%%%%%%%%%%%%%%%%%%%%%%%%%%%%%%%%%%%%%%%%
Using a Wigner's friend scenario, we have presented a protocol that exchanges a classical message between observers in distinct Everett branches by swapping observer states. The protocol is fully consistent with standard unitary quantum mechanics, and therefore challenges intuitive notions of information locality in the multiverse and invites refined theoretical, philosophical, and experimental investigation.

\section*{Acknowledgements}

I thank David Deutsch, Sam Kuypers, Ryan Mann and Tony Short for helpful discussions and comments, including suggestions regarding the interpretation of the protocol. I also thank Chiara Marletto, Vlatko Vedral, and the other participants of the Cocconato Constructor Theory Workshop 2024 for valuable discussions. Finally I thank Rob Sullivan for discussions that included posing the question that motivated this work. 

\bibliographystyle{apsrev4-2-titles}
\bibliography{references}

@article{Everett1957,
  author  = {Everett, Hugh},
  title   = {Relative State Formulation of Quantum Mechanics},
  journal = {Reviews of Modern Physics},
  volume  = {29},
  number  = {3},
  pages   = {454--462},
  year    = {1957},
  doi     = {10.1103/RevModPhys.29.454}
}

@article{Deutsch2002,
  author  = {Deutsch, David},
  title   = {The Structure of the Multiverse},
  journal = {Proceedings of the Royal Society A},
  volume  = {458},
  number  = {2028},
  pages   = {2911--2923},
  year    = {2002},
  doi     = {10.1098/rspa.2002.1015}
}

@book{Wallace2012,
  author    = {Wallace, David},
  title     = {The Emergent Multiverse: Quantum Theory according to the Everett Interpretation},
  publisher = {Oxford University Press},
  year      = {2012},
  address   = {Oxford}
}

@article{Zurek2003,
  author  = {Zurek, Wojciech H.},
  title   = {Decoherence, Einselection, and the Quantum Origins of the Classical},
  journal = {Reviews of Modern Physics},
  volume  = {75},
  number  = {3},
  pages   = {715--775},
  year    = {2003},
  doi     = {10.1103/RevModPhys.75.715}
}

@book{Deutsch1997,
  author    = {Deutsch, David},
  title     = {The Fabric of Reality},
  publisher = {Allen Lane},
  year      = {1997},
  address   = {London}
}

@article{Polchinski1991,
  author  = {Polchinski, Joseph},
  title   = {{Weinberg's Nonlinear Quantum Mechanics and the Einstein--Podolsky--Rosen Paradox}},
  journal = {Physical Review Letters},
  volume  = {66},
  number  = {4},
  pages   = {397--400},
  year    = {1991},
  doi     = {10.1103/PhysRevLett.66.397}
}

@article{Schlosshauer2005,
  author  = {Schlosshauer, Maximilian},
  title   = {Decoherence, the Measurement Problem, and Interpretations of Quantum Mechanics},
  journal = {Reviews of Modern Physics},
  volume  = {76},
  number  = {4},
  pages   = {1267--1305},
  year    = {2005},
  doi     = {10.1103/RevModPhys.76.1267}
}

@article{Riedel2017,
  title={The rise and fall of redundancy in decoherence and quantum Darwinism},
  author={Riedel, C Jess and Zurek, Wojciech H and Zwolak, Michael},
  journal={New Journal of Physics},
  volume={14},
  number={8},
  pages={083010},
  year={2012},
  publisher={IOP Publishing}
}

@article{Wigner1961,
  title={Remarks on the mind-body question},
  author={Wigner, Eugene P},
  journal={Philosophical reflections and syntheses},
  pages={247--260},
  year={1995},
  publisher={Springer}
}

@article{frauchiger2018,
  title={Quantum theory cannot consistently describe the use of itself},
  author={Frauchiger, Daniela and Renner, Renato},
  journal={Nature communications},
  volume={9},
  number={1},
  pages={3711},
  year={2018},
  publisher={Nature Publishing Group UK London}
}

@article{Brukner2018,
  title={A no-go theorem for observer-independent facts},
  author={Brukner, {\v{C}}aslav},
  journal={Entropy},
  volume={20},
  number={5},
  pages={350},
  year={2018},
  publisher={MDPI}
}

@article{Bong2020,
  title={{A strong no-go theorem on the Wigner’s friend paradox}},
  author={Bong, Kok-Wei and Utreras-Alarc{\'o}n, An{\'\i}bal and Ghafari, Farzad and Liang, Yeong-Cherng and Tischler, Nora and Cavalcanti, Eric G and Pryde, Geoff J and Wiseman, Howard M},
  journal={Nature Physics},
  volume={16},
  number={12},
  pages={1199--1205},
  year={2020},
  publisher={Nature Publishing Group UK London}
}

@article{Aaronson2009,
  title={Closed timelike curves make quantum and classical computing equivalent},
  author={Aaronson, Scott and Watrous, John},
  journal={Proceedings of the Royal Society A: Mathematical, Physical and Engineering Sciences},
  volume={465},
  number={2102},
  pages={631--647},
  year={2009},
  publisher={The Royal Society London}
}

@article{Lloyd2011,
  title={Quantum mechanics of time travel through post-selected teleportation},
  author={Lloyd, Seth and Maccone, Lorenzo and Garcia-Patron, Raul and Giovannetti, Vittorio and Shikano, Yutaka},
  journal={Physical Review D—Particles, Fields, Gravitation, and Cosmology},
  volume={84},
  number={2},
  pages={025007},
  year={2011},
  publisher={APS}
}

@article{Deutsch1991,
  title={Quantum mechanics near closed timelike lines},
  author={Deutsch, David},
  journal={Physical Review D},
  volume={44},
  number={10},
  pages={3197},
  year={1991},
  publisher={APS}
}

@article{Bennett2009,
  title={Can closed timelike curves or nonlinear quantum mechanics improve quantum state discrimination or help solve hard problems?},
  author={Bennett, Charles H and Leung, Debbie and Smith, Graeme and Smolin, John A},
  journal={{Physical Review Letters}},
  volume={103},
  number={17},
  pages={170502},
  year={2009},
  publisher={APS}
}

@article{Deutsch1985,
  title={Quantum theory as a universal physical theory},
  author={Deutsch, David},
  journal={International Journal of Theoretical Physics},
  volume={24},
  number={1},
  pages={1--41},
  year={1985},
  publisher={Springer}
}

@misc{Goldstein2017,
  author       = {Goldstein, Sheldon},
  title        = {Bohmian Mechanics},
  howpublished = {\emph{Stanford Encyclopedia of Philosophy}},
  editor       = {Zalta, Edward N.},
  year         = {2017},
  note         = {\url{https://plato.stanford.edu/entries/qm-bohm/}}
}

@article{Zeng2025,
  title={Towards violations of Local Friendliness with quantum computers},
  author={Zeng, William J and Labib, Farrokh and Russo, Vincent},
  journal={Quantum},
  volume={9},
  pages={1851},
  year={2025},
  publisher={Verein zur F{\"o}rderung des Open Access Publizierens in den Quantenwissenschaften}
}

@article{WisemanCavalcantiRieffel2023,
  title={A" thoughtful" Local Friendliness no-go theorem: a prospective experiment with new assumptions to suit},
  author={Wiseman, Howard M and Cavalcanti, Eric G and Rieffel, Eleanor G},
  journal={Quantum},
  volume={7},
  pages={1112},
  year={2023},
}

@article{Violaris2024EntanglingDisciplines,
  author    = {Violaris, Maria},
  title     = {Entangling Disciplines: Causality, Entropy and Time-Travel Paradoxes on a Quantum Computer},
  journal = {2024 IEEE Quantum Science and Engineering Education Conference (QSEEC)},
  year      = {2024},
  pages     = {71--81},
}

@article{Violaris2023PhysicsLab,
  author    = {Violaris, Maria},
  title     = {A Physics Lab Inside Your Head: Quantum Thought Experiments as an Educational Tool},
  journal = {2023 IEEE International Conference on Quantum Computing and Engineering (QCE)},
  year      = {2023},
  pages     = {203--213},
}

@article{weinberg1989,
  title={Precision tests of quantum mechanics},
  author={Weinberg, Steven},
  journal={{Physical Review Letters}},
  volume={62},
  number={5},
  pages={485},
  year={1989},
  publisher={APS}
}

@article{weinberg1989b,
  title={Testing quantum mechanics},
  author={Weinberg, Steven},
  journal={Annals of Physics},
  volume={194},
  number={2},
  pages={336--386},
  year={1989},
  publisher={Elsevier}
}

\end{document}